\newcommand{\pol}{\ensuremath {\mathit{Pollard}}{\xspace}} 
\newcommand{\gam}{\ensuremath {\mathit{\gamma}}{\xspace}}  
\newcommand{\lam}{\ensuremath {\mathit{\lambda}}{\xspace}} 
\newcommand{\ta}{\ensuremath {\mathit{\tau}}{\xspace}}     
\newcommand{\rsamod}{\ensuremath {\mathit{N}}{\xspace}}    
\newcommand{\nbr}{\ensuremath {\mathit{n}}{\xspace}}       
\newcommand{\srv}{\ensuremath {\mathit{y}}{\xspace}}       
\newcommand{\LPOS}{{\em LPOS}{\xspace}}                    
\newcommand{\PDAFT}{{\em PDAFT}{\xspace}}                  
\newcommand{\ECEG}{{\em ECEG}{\xspace}}
\newcommand{\PPSS}{{\em PPSS}{\xspace}}
\newcommand{\Ra}{\ensuremath {\stackrel{\$}{\leftarrow}{\xspace}}}
\newcommand{\as}{\ensuremath {\leftarrow}{\xspace}}
\newcommand{\sk}{\ensuremath {\mathit{sk}}{\xspace}}
\newcommand{\pk}{\ensuremath {\mathit{PK}}{\xspace}}
\newcommand{\ym}{\ensuremath {\mathit{YM}}{\xspace}}
\newcommand{\ope}{\ensuremath {\mathit{OPE}}{\xspace}}
\newcommand{\yme}{\ensuremath {\mathit{YM.ElGamal}}{\xspace}}
\newcommand{\pr}{\ensuremath {\mathit{\pi}}{\xspace}}
\newcommand{\fc}{\ensuremath {\mathit{FC}}{\xspace}}
\newcommand{\rss}{\ensuremath {\mathit{r}}{\xspace}}
\newcommand{\chn}{\ensuremath {\mathit{chn}}{\xspace}}
\newcommand{\OEnc}[2]{\ensuremath{\mathit{OPE}.\mathit{E}_{#1}\mskip-1mu(#2)}}
\newcommand{\algrule}[1][.2pt]{\par\vskip.5\baselineskip\hrule height #1\par\vskip.5\baselineskip}
\newtheorem{definition}{Definition}{\bfseries}{\rmfamily}
\newtheorem{mytheorem}{Theorem}{\bfseries}{\rmfamily}
\begin{document}

  \title{LPOS: Location Privacy for Optimal Sensing in Cognitive Radio Networks
  \vspace{-0.2in}
}

\author{Mohamed~Grissa$^{\star}$\thanks{This work was supported in part by the US National Science Foundation under NSF CAREER award CNS-0846044.}, Attila Yavuz$^{\star}$, and Bechir Hamdaoui$^{\star}$\\
$^{\star}$\small Oregon State University, grissam,yavuza,hamdaoub@onid.oregonstate.edu\\

\thanks{\copyright~2015 IEEE. Personal use of this material is permitted. Permission from IEEE must be obtained for all other uses, in any current or future media, including reprinting/republishing this material for advertising or promotional purposes, creating new collective works, for resale or redistribution to servers or lists, or reuse of any copyrighted component of this work in other works.}

}

%
%

\maketitle
{\let\thefootnote\relax\footnote{{\\Digital Object Identifier 10.1109/GLOCOM.2015.7417611}}}

\makeatletter

\begin{abstract}
Cognitive Radio Networks (CRNs) enable opportunistic access to the licensed channel resources by allowing unlicensed users to exploit vacant channel opportunities.
%
One effective technique through which unlicensed users, often referred to as Secondary Users (SUs), acquire whether a channel is vacant is cooperative spectrum sensing.
Despite its effectiveness in enabling CRN access, cooperative sensing suffers from location privacy threats, merely because the sensing reports that need to be exchanged among the SUs to perform the sensing task are highly correlated to the SUs' locations. In this paper, we develop a new {\em Location Privacy for Optimal Sensing (\LPOS)} scheme that preserves the location privacy of SUs while achieving optimal sensing performance through voting-based sensing. In addition, \LPOS~is the only alternative among existing CRN location privacy preserving schemes (to the best of our knowledge) that ensures high privacy, achieves fault tolerance, and is robust against the highly dynamic and wireless nature of CRNs. 
\end{abstract}

\section{Introduction}
\label{sec:Introduction}
Cognitive Radio Networks (CRNs) have emerged as a key technology for improving spectrum utilization through opportunistic spectrum access. They do so by allowing unlicensed spectrum users, often referred to as Secondary Users (SUs), to identify and exploit unused opportunities of licensed channels, so long as they do not cause any interference to licensed users, often referred to as Primary Users (PUs)~\cite{hamdaoui2009adaptive}.

Two main approaches can be used by SUs to acquire whether PUs are present in a licensed channel~\cite{akyildiz2011}. The first approach is based on geo-location databases and is very similar to what is used in LBSs (location-based services). The second approach, referred to as cooperative spectrum sensing, relies on the SUs themselves to visit and sense the licensed channels, on a regular basis, to collaboratively decide whether a channel is vacant or not.
In this paper, we focus on the cooperative spectrum sensing approach whose general architecture is shown in Fig.~\ref{fig:initial_network}. In this architecture, the Fusion Center (\fc) is the entity responsible for orchestrating the SUs to perform the sensing task so as to collectively decide whether PUs are present or not.
Through a control channel, \fc~queries SUs, each having sensing capability, to tune to specific channels/frequencies, measure the energy level (known as Received Signal Strength (RSS)) observed in each of these channels, and report the observed RSS values back to \fc\footnote{\scriptsize Energy detection is the most popular method for signal detection due to its simplicity and small sensing time~\cite{fatemieh2011using}.}.
\fc~then first combines the RSS values collected from the different SUs and then compares the combined value against a detection threshold, $\ta$, to decide whether a channel is available. Channel availability decisions are sent back to the SUs to rely on during their opportunistic spectrum access.

\begin{figure}
\center
    \includegraphics[width=0.36\textwidth]{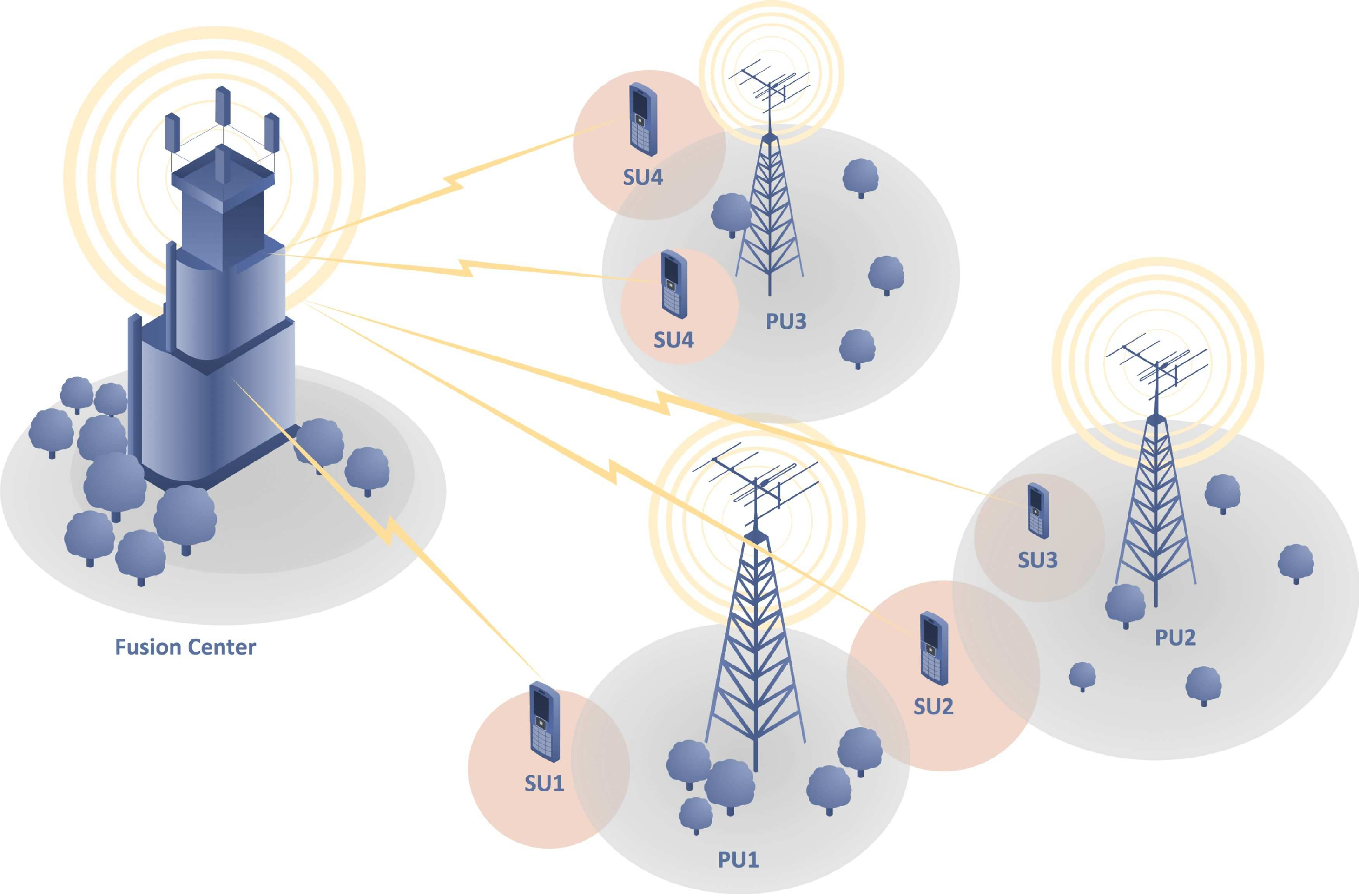}
    \caption{Cooperative spectrum sensing architecture}
    \label{fig:initial_network}
    \vspace{-10pt}

\end{figure}

Despite its effectiveness in improving sensing performance, cooperative sensing suffers from many security and privacy threats that make SUs shy away from participating in the cooperative sensing task. One of these threats is location disclosure. Cooperative spectrum sensing exploits spatial diversity for enhancing accuracy of sensing and this can jeopardize the location privacy of SUs. It has been shown in~\cite{li2012location} that RSS values are heavily correlated to the SUs' physical locations, thus making it not too difficult to compromise the location privacy of SUs.
%
%
Disclosing the location information is undesirable especially when \fc~is run by an untrusted service provider~\cite{bhattacharjee2013vulnerabilities}. The fine-grained location data can be used to determine a lot of information about an individual's beliefs, preferences, and behavior~\cite{wang2014location}. In fact, by analyzing location traces of a user, an adversary can learn that he/she regularly goes to a hospital, and may then sell this information to pharmaceutical advertisers without the user's consent. In addition, malicious adversaries with criminal intent could use this information to pose a threat to an individual's security and privacy. Being aware of such potential privacy risks, SUs may not want to share their data with \fc~s or databases~\cite{wang2014location}, making the need for preserving the location privacy of these users of a high importance.

This paper addresses the SUs' location disclosure threat, considered as one of the most important threats to CRN users' privacy, by designing a protocol that guarantees a high location privacy by concealing the RSS values from \fc~while enabling optimal sensing using the half-voting rule proposed in~\cite{zhang2008cooperative}.

\subsection{Related Work} \label{subsec:RelatedWork}
Although lots of research efforts have already been made when it comes to addressing issues related protocol design~\cite{hamdaoui2008mac}, resource optimization~\cite{venkatraman2010opportunistic,elmachkour2014green,noroozoliaee2013efficient}, spectrum sensing~\cite{akyildiz2011}, and performance modeling and analysis~\cite{elmachkour2015data,adem2014delay}, very little has been made in regards to location privacy issues~\cite{li2012location,chen2014pdaft}.
For instance, Shuai Li et al.~\cite{li2012location} showed that location information of SUs  could be inferred from the sensing reports, and called this attack Single CR Report Location Privacy (SRLP) attack. Another attack in the same context occurs when a user joins or leaves the network. Any malicious entity can estimate the report of a user and hence its location from the variations in the final aggregated RSS measurements when the node joins and leaves the network. This is termed Differential Location Privacy attack.
%
%
To cope with these attacks, the authors propose~\PPSS, a Privacy Preserving collaborative Spectrum Sensing protocol, that uses secret sharing and the Privacy Preserving Aggregation (PPA) process to hide the content of specific sensing reports. It also uses dummy report injections to cope with the Differential Location Privacy attack. However, \PPSS~has several limitations. First, it requires all the sensing reports in order to decode the aggregated result, which makes it quite impractical since the wireless channel may be unreliable, making some sensing reports not accessible by \fc~. Hence, \fc~will not be able to decrypt the aggregated sensing result. Moreover, it cannot cope with the dynamics resulting when multiple users join or leave the network simultaneously. In addition, the pairwise secret sharing process incurs extra communication overhead, which results in an additional delay especially when all the keys need to be updated when a user joins or leaves the network. Also, the encryption scheme used here is practical only when the plaintext space is small, since the decryption of the aggregated result requires solving the DLP problem, which is very costly as shown in Table~\ref{tab:Table3}.



Despite the importance of this issue and the potential that CRNs present, little attention has been paid to this problem. This drove us to look outside the context of CRNs and try to find an approach that might be applied to our setup. We were particularly interested in the work proposed by Chen et al.~\cite{chen2014pdaft} where they present a privacy-preserving data aggregation scheme with fault tolerance for smart grid communications, termed \PDAFT. They considered a setting very similar to the one we study in this work, and tried to preserve users' privacy when smart meters installed within each house sense the consumption information and send it to the control center.

\PDAFT~combines Paillier cryptosystem with Shamir's secret sharing, where a set of smart meters sense the consumption of different households, encrypt their reports using Paillier, then send them to a gateway. The gateway multiplies these reports and forwards the result to the control center, which selects a number of servers (among all servers) to cooperate in order to decrypt the aggregated result. However, \PDAFT~requires a dedicated gateway to collect the encrypted data and a minimum number of working servers in the control center to be able to decrypt the aggregated result. In addition, \PDAFT, like most of the aggregation-based methods, is prone to differential attacks that we mentioned earlier, and does not provide a mechanism that prevents this attack. Another drawback, which is common to simple aggregation-based methods, is that they usually do not provide optimal sensing performance and might be affected by the distribution of the RSS values. Throughout this paper, by optimal sensing we mean final decision accuracy regrading the channel availability.

%

\subsection{Our Contribution} \label{subsec:OurContribution}


\begin{table*}[t!]
\small
\centering  \caption{Privacy, dynamism handling, fault tolerance and sensing performance of our scheme and previous schemes} \label{tab:Table1}
\renewcommand{\arraystretch}{1.2}{

\begin{tabular}{||c|c||c||c||c||c||}

\hline \multicolumn{2}{||c||}{\em Evaluation} & \multicolumn{1}{|c||}{\textbf{Location Privacy}} & \multicolumn{1}{|c||}{\textbf{Dynamism}} & \multicolumn{1}{|c||}{\textbf{Fault Tolerance}} & \multicolumn{1}{|c||}{\textbf{Sensing Performance}}    \\ \hline
\hline  \multicolumn{2}{||c||}{\textbf {{\em Our Scheme:} \LPOS}}  & High & Multiple & yes & optimal \cite{zhang2008cooperative} \\ \hline
\hline  \multirow{2}{*}{\em Generic} & ECC El Gamal & Low & Multiple & yes & not optimal \\ \cline{2-6}
                                     & \PDAFT \cite{chen2014pdaft} & Low & Multiple & yes & not optimal \\ \cline{1-6} \hline \hline

\multicolumn{2}{||c||}{\PPSS~\cite{li2012location}}  & Medium & Single & No & not optimal \\
\hline

\end{tabular}}
\flushleft{\scriptsize{ \textbf{Privacy:} If {\em FC} can learn the aggregated result we evaluate the privacy to be low since an estimation of sensing reports of some users is possible when there are users leaving/joining the network. {\em Medium} privacy if there is a mechanism to cope with the mentioned problem but still using aggregation. We qualify our scheme to have {\em High} privacy since it does not have this vulnerability. { \textbf{Dynamism:}} {\em Multiple} when the scheme can handle multiple users leaving/joining the network simultaneously and {\em Single} when only one SU joining/leaving the network is supported. { \textbf{Fault Tolerance:}} whether or not the system still works normally when one of the SUs fails to send its report. { \textbf{Sensing Performance:} a scheme is optimal if its sensing performance is proven to be optimal otherwise it is not optimal}
}}

\vspace{-3mm}
\end{table*}

We developed \LPOS, a new location privacy for optimal sensing scheme in CRNs. Its main idea lies in enabling privacy-preserving comparison of RSS values and \fc's threshold in an efficient manner via a novel integration of Order Preserving Encryption (OPE)~\cite{boldyreva2009order} and Yao's Millionaires' protocol~\cite{yao1982protocols}. We summarize the key features of our scheme below, compare it in Table~\ref{tab:Table1} to other schemes, and give detailed performance analysis and comparison in Section \ref{sec:PerformanceAnalysis}. 


The key features of \LPOS~are:

\underline{{\em 1) Optimal Sensing}}: To the best of our knowledge, \LPOS~is the first scheme that enables location privacy in CRNs with an optimal spectrum sensing performance. It does so by privacy-preserving realization of the {\em half-voting} rule proposed in \cite{zhang2008cooperative}, which has been shown to be the optimal decision rule for spectrum sensing using energy detection. Unlike aggregation methods that may be vastly impacted by the distribution of RSS values (and misleading \fc~to make inaccurate decisions), this rule enjoys an optimal sensing performance.


\underline{{\em 2) High Location Privacy}}: Unlike some aggregation type protocols~\cite{li2012location,chen2014pdaft}, \LPOS~does not leak RSS information when users join/leave the network, nor does it require dummy report injection to prevent differential attacks as done in~\cite{li2012location}.




\underline{{\em 3) Fault Tolerance}}: In our scheme, if some users cannot sense the channels or fail to send their reports, \fc~only needs to update the voting threshold, $\lam$, with the available users to make an accurate decision. However, some existing schemes cannot handle such failures. For example, \PPSS~\cite{li2012location} requires inputs from all (pre-determined) users to be able to decrypt the aggregated RSS and make a decision. 
\LPOS~does not have such a limitation, since it relies on a voting-based approach and \fc~evaluates each contribution of users towards the decision individually, which makes \LPOS~more fault-tolerant compared to~\PPSS~\cite{li2012location}.

\underline{{\em 4) Scalability and Computational Efficiency}}: \LPOS~offers the smallest communication overhead among its counterparts for large network sizes, and its computational complexity is logarithmic in the number of users, which makes it more practical and scalable (a detailed analysis is given in Section \ref{sec:PerformanceAnalysis}).

\underline{{\em 5) Robust Against Network Dynamism}}: When a group of users join or leave the network, the system security and performance should be maintained. Unlike its counterparts (e.g.,~\PPSS~\cite{li2012location}), which can deal with the joining/leaving of only a single user at a time, \LPOS~can effectively handle multiple, simultaneous join/leave operations.

\section{Preliminaries}
\label{sec:Prelim}


\noindent {\bf CRN System and Sensing Model.}
We consider a centralized CRN that consists of a \fc~and $\nbr$ SUs, as shown in Fig.~\ref{fig:initial_network}.
We assume that each SU is capable of assessing RSS values of channels through energy detection methods~\cite{fatemieh2011using}, and communicating them to \fc, which it then combines them to make decisions regarding whether channels are available. \fc~then broadcasts the final decisions back to SUs.

\noindent {\bf Half-voting rule.}
Two reasons motivated our choice of a voting-based rule over an aggregation-based fusion rule: (i) it has a better sensing performance than aggregation-based rules~\cite{shen2009maximum}, and (ii)
it does not expose users to the privacy issues, we mentioned earlier, that would otherwise be exposed to when aggregation-based rules are used.
The authors in~\cite{zhang2008cooperative} derived a voting threshold, \lam, for optimal spectrum sensing in voting-based CRNs, which is termed half-voting rule. With this, when the number of users whose RSS values are greater than \ta~is higher than $\lam$, then \fc~can conclude that the channel is busy.

\noindent \textbf{Notation.} Operators $||$ and $|x|$ denote the concatenation and the bit length of variable $x$,
respectively.  $x\stackrel{\$}{\leftarrow}\mathcal{S}$ denotes that $x$ is randomly and uniformly selected from the set $\mathcal{S}$. Large primes $q$ and $p>q$ such that $q |
(p-1)$, and a generator $\alpha$ of the subgroup $G$ of order $q$ in $\mathbb{Z}_{p}^{*}$ are selected such that Discrete Logarithm Problem (DLP)~\cite{menezes2010handbook} is intractable. $(\sk,\pk)$~denotes a private/public key pair of ElGamal Encryption~\cite{ElGamal:1985:PKC:19478.19480}, generated under $(G,p,q,\alpha)$. $c\as\OEnc{K}{M}$ denotes order preserving encryption (as defined in Definition~\ref{def:OPE})~of a message $M\in\{0,1\}^{d}$ under private key $K$, where integer $d$ is the block size of \ope.


\noindent \textbf{Cryptographic Building Blocks.} Our scheme utilizes various cryptographic building blocks, which are described below:


$\bullet$~{\em Order Preserving Encryption~(\ope)~\cite{boldyreva2009order}:} 

\begin{definition}~\label{def:OPE}
An \ope~is a deterministic symmetric encryption scheme whose encryption operation preserves the numerical ordering of the plaintexts, i.e. for any two messages $m_1$ and $m_2 \:~s.t.~\: m_1 \leq m_2$, we have $c_1\as\OEnc{K}{m_1}$ $\leq c_2\as\OEnc{K}{m_2}$.
\end{definition}

The \ope~concept was first formalized by Boldyreva et. al~\cite{boldyreva2009order}. Note that our scheme can use {\em {any secure}} \ope~scheme (e.g.,~\cite{kerschbaum2014optimal}) as a building block, and receive the benefits of the security enhancement (e.g.,~\cite{kerschbaum2014optimal}). However, we chose the publicly available implementation of Boldyreva's scheme~\cite{boldyreva2009order} so as to evaluate our scheme in terms of execution time. In \cite{boldyreva2009order}, an ideal security notion, called \emph{indistinguishability under ordered chosen-plaintext attack (IND-OCPA)}, was introduced, which implies that \ope~has no leakage, except the order of ciphertexts. However, Boldyreva et. al~\cite{boldyreva2011order} showed that the ideal \ope~security is unachievable, since it requires a ciphertext size that is at least exponential in the size of the plaintext, leading to the introduction and adoption of a weaker security notion of {\em Random Order-Preserving Functions (ROPF)}, as defined below. 

\begin{definition} \label{def:OPESecurity}
An \ope~based on {\em ROPF}~leaks the order of plaintexts and also at least half of the high-order bits of the plaintext~\cite{boldyreva2011order}.
\end{definition}


$\bullet$~{\em Secure Comparison Protocol:} The Yao's Millionaires' (\ym) protocol~\cite{yao1982protocols} enables two parties to execute ``the greater-than" function, $GT(x, y) = [x > y]$, without disclosing any other information apart from the outcome of the comparison. We use an efficient \ym~scheme~\cite{lin2005efficient}, referred to as \yme, which ensures that only the initiator learns the outcome.

 \begin{definition} \label{def:yme}
Let $(\mathcal{X},\mathcal{Y})$ and $(x,y)\in\{0,1\}^{l}$ be two parties and $l$-bit integers to be compared, respectively. Let $\pr=(l,q,p,\alpha,$ $\{\sk,\pk\})$ be \yme~parameters generated by the protocol initiator $\mathcal{X}$. \yme~returns a bit $b\as\yme$ $(x,y,\pr)$, where $b=0$ if $x<y$ and $b=1$ otherwise. Only $\mathcal{X}$ learns $b$ but $(\mathcal{X},\mathcal{Y})$ learn nothing else. \yme~is secure in the semi-honest setting if ElGamal encryption scheme~\cite{ElGamal:1985:PKC:19478.19480} is secure.
\end{definition}

%

$\bullet$~{\em Group Key Establishment and Management:} We use a dynamic and contributory group key establishment and management protocol for secure group communication purposes.

\begin{definition} \label{def:GroupKey}
{\em Tree-based Group Elliptic Curve Diffie-Hellman} {\em (TG ECDH)}~\cite{wang2006performance} permits $n$ distinct users to collaboratively establish and update a common group key $K$ by extending 2-party ECDH key exchange protocol to $n$-party. {\em TGECDH} is secure if Elliptic Curve Discrete Logarithm Problem (ECDLP) is intractable~\cite{wang2006performance}.
\end{definition}

\section{The Proposed Scheme}
\label{sec:ProposedSchemes}

Voting-based spectrum sensing offers several advantages over its aggregation-based counterparts as discussed in Section \ref{sec:Prelim}. However, this approach requires comparing \fc's threshold $\ta$ and the RSS value $r_i$ of each user $U_i$, thereby forcing at least one of the parties to expose its information to the other. One solution is to use a secure comparison protocol, such as \yme, between \fc~and each user $U_1,\ldots,U_{\nbr}$ in the network, which permits \fc~to learn the total number of users above/below threshold $\ta$ (as discussed in Section \ref{subsec:RelatedWork}) but nothing else. However, secure comparison protocols involve several costly public key crypto operations (e.g., modular exponentiation), and therefore $\mathcal{O}(\nbr)$ invocations of such a protocol per sensing period incur prohibitive computational and communication overhead.

\begin{algorithm}[p]
\caption{\LPOS~Algorithm (the proposed scheme) }\label{alg1}
\begin{algorithmic}[1]

\Statex   \textbf{Initialization}: Executed only once at the beginning.
\State \fc~sets its energy sensing and optimal voting thresholds \ta~and $\lam$, respectively as in~\cite{zhang2008cooperative}. Bit-length $\gam=|\ta|=|\rss_i|$ for $i=1,\ldots,\nbr$, where $\rss_i$ denotes RSS value of user $U_i$.
\State \fc~generates \yme~parameters \pr~(as defined in Definition \ref{def:yme}) and pre-computes ElGamal encryption values in \pr~based on \ta~to accelerate \yme~protocol. \fc~also generates a random padding $D\Ra\{0,1\}^{d-\gam-1}$, where $d$ is the block size of \ope. $D$ is known to all users.
\State There are $\nbr$ users $\{U_i\}_{i=1}^{\nbr}$ in the system, whose RSS values are denoted as $\rss_i$ for $i=1,\ldots,\nbr$, respectively.
\State $\mathcal{G}=\{U_i\}_{i=1}^{\nbr}$ collaboratively establish a group key $K$ via {\em TGECDH} (Definition \ref{def:GroupKey}). 
\State \fc~establishes an authenticated secure channel $\chn_i$~with each user $U_i$ for $i = 1,\ldots,\nbr$. \Comment {(e.g., via SSL/TLS) }.

\hspace{20pt}\algrule
\Statex   \textbf{Private Sensing}: Executed every sensing period $t_w$
\State $U_i$ computes $c_i \as \OEnc{K}{D||\rss_i}$ for $i=1,\ldots,\nbr$.
\State $U_i$ sends $c_i$ to \fc~over $\chn_i$ for $i=1,\ldots,\nbr$.
\State \fc~sorts encrypted RSS values as $c_{min} \leq \ldots \leq c_{max}$ (by Definition \ref{def:OPE}). \label{alg1:line:pt}
\State \fc~initiates \yme~as $b\as\yme$ $(\ta,\rss_{id_{max}},\pr)$ with user $id_{max}$ having the maximum $c_{max}$.
\If {$b=1$}
\State $decision$ $ \gets $ Channel is free.
\Else
\State \fc~initiates \yme~as $b\as\yme(\ta,$ $\rss_{id_{min}},\pr)$ with user $id_{min}$ having the minimum $c_{min}$.
\If {$b=0$}
\State $decision$ $ \gets $ Channel is busy.
\Else
\State \fc~initiates \yme~with a subset of users based on a binary search of \ta~on the remaining encrypted RSS values as described below. Let index $I$ be the index of user $c_{I}$, where \yme~with binary search process is finalized (i.e., $\rss_{I-1} \leq \ta \leq \rss_I$). 
\State \fc~counts the number of $U_i$s s.t. $\ta \leq \rss_i$ : $z \gets n-I$
\If {$z \geq  \lam$} 
\State $decision$ $ \gets $ Channel is busy
\Else
\State $decision$ $ \gets $ Channel is free
\EndIf
\EndIf
\EndIf
\Return $decision$

\hspace{20pt}\algrule
\Statex   \textbf{Group Membership Change Update}:
\State If new user(s) join/leave $\mathcal{G}$ in $t_w$, the new set of users $\mathcal{G}'$ forms a new group key $K'$ via {\em TGECDH}. \fc~may update its threshold and \yme~parameters as \lam'~and \pr' when required.
\State Follow the private sensing steps with new $(K',\lam',\pr')$.
\end{algorithmic}
\end{algorithm}

The key observation that led us to overcome this challenge is the following: If we enable \fc~to learn the relative order of RSS values but nothing else, then the number of \yme~invocations can be reduced drastically. That is, {\em the knowledge of relative order permits \fc~to execute \yme~protocol at worst-case $\mathcal{O}(log(\nbr))$ by utilizing a binary-search type approach}, as opposed to running \yme~with each user in total $\mathcal{O}(\nbr)$ overhead.

This simple yet powerful observation enables us to develop \LPOS, which achieves the above objective via an innovative integration of \ope~scheme, {\em TGECDH} and \yme~protocols. {\em The crux of the idea is to make users \ope~encrypt their RSS values under a group key $K$, which is derived via {\em TGECDH} at the beginning of the sensing period}. In this way, \fc~can learn the relative order of encrypted RSS values but nothing else (and users do not learn each others' RSS values, as they are sent to \fc~over a pairwise secure channel). \fc~then uses this knowledge to run  {\em \yme~protocol by utilizing a binary-search strategy}, which enables it to identify the total number of users above/below threshold $\ta$ (as defined by voting-based optimal sensing in~\cite{zhang2008cooperative}) with only $\mathcal{O}(log(\nbr))$ complexity. This strategy makes \LPOS~the only alternative among its counterparts that can achieve CRN location privacy with an optimal spectrum sensing, fault-tolerance and network dynamism simultaneously (as discussed in Section \ref{subsec:OurContribution} and Section \ref{sec:PerformanceAnalysis}).

We give the detailed description of \LPOS~in Algorithm \ref{alg1}, and further outline the high-level description of \LPOS~as below:

 $\bullet$~{\em Initialization:} \fc~sets up spectrum sensing and crypto parameters for cryptographic building blocks. Users establish a group key $K$ via {\em TGECDH}, with which they will \ope~encrypt their RSS values during the private sensing. \fc~also establishes a secure channel $\chn_i$ with each user $U_i$.

 $\bullet$~{\em Private Sensing:} Each user $U_i$ \ope~encrypts its RSS value $r_i$ with group key $K$ and sends ciphertext $c_i$ to \fc~over $\chn_i$. This permits \fc~to sort ciphertexts as $c_{min} \leq \ldots \leq c_{max}$ without learning corresponding RSS values, and the secure channel $\chn_i$ protects the communication of $U_i$ from other users (as each $r_i$ is encrypted under the same $K$) as well as from outside attackers. \fc~then initiates \yme~first with the user that has the highest RSS value $\rss_{max}$. If it is smaller than energy sensing threshold $\ta$ then the channel is free. Otherwise, \fc~initiates \yme~with the user that has $\rss_{min}$. If it is bigger than $\ta$ then the channel is busy. Otherwise, to make the final decision based on the optimal sensing threshold $\lam$, \fc~runs \yme~according to the binary-search strategy as described in Steps 17-22, which guarantees the decision at the worst $\mathcal{O}(log(\nbr))$ invocations.

 $\bullet$~{\em Update Private Sensing after Group Membership Changes:} At the beginning of each sensing period $t_w$, according to the membership changes in the user group, a new group key may be formed via the update procedure of {\em TGECDH} efficiently. \fc~may optionally update sensing parameters. The private sensing for the new sensing period then begins with new parameters and group key $K'$ and is executed as described above.

\section{Security Analysis}
\label{sec:SecurityAnalysis}

\textbf{Threat Model:} Our threat model focuses on the {\em location privacy (i.e., RSS values) of SUs}.  We consider {\em honest but curious (semi-honest)} setting for \fc~and SUs forming group $\mathcal{G}$(no party, including \fc, maliciously modifies the integrity of its input). This means that they execute the protocol honestly but will show interest in learning information about the other parties. That is, \fc~and other SUs in the group $\mathcal{G}$ may target the location information of a SU $U_i$. RSS value $r_i$ of $U_i$ reveals this location information and therefore should be protected. SUs also may target the threshold value \ta~of \fc. However, we assume that \fc~does not collude with some SUs to localize the other SUs, nor do SUs collude with each others or expose the group key $K$ to \fc~or external parties maliciously. Similarly, we assume that \fc~and SUs do not inject false \ta~or RSS values into spectrum sensing. Finally, an external attacker $\mathcal{A}$ may launch passive attacks against the output of cryptographic operations and active attacks including packet interception/modification to \fc~and SUs. We rely on traditional authenticated secure channel to prevent such an external attacker $\mathcal{A}$.

\textbf{Security Objectives and Analysis:} 
\begin{definition} \label{def:SecObjectives}
Under our threat model described above, \LPOS~security objectives are: (i) RSS values $r_i$ of each $U_i$ remain confidential during all sensing periods. (ii) The sensing threshold \ta~of \fc~remains confidential for all sensing periods. (iii) A secure channel is maintained between each SU and \fc. (iv) Objectives (i)-(iii) are maintained for every membership changes in $\mathcal{G}$. 
\end{definition}

It is easy to show that \LPOS~is secure according to Definition \ref{def:SecObjectives}, as long as its underlying cryptographic blocks are secure.
\begin{mytheorem}
\LPOS~achieves security objectives in Definition~\ref{def:SecObjectives}, as long as \ope, \yme~and {\em TGECDH} are secure according to Definitions~\ref{def:OPESecurity},~\ref{def:yme} and \ref{def:GroupKey}, respectively.
\end{mytheorem}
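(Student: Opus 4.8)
The plan is to prove the theorem by a modular composition argument: I would treat each of the four security objectives of Definition~\ref{def:SecObjectives} separately, reduce it to the guarantee of the building block that protects the corresponding piece of protocol data, and then argue that, because the blocks act on essentially disjoint parts of the transcript, their simulators compose into a single semi-honest simulator for each party's view. Concretely, the view of \fc~consists of the \ope~ciphertexts $c_i$ together with the \yme~output bits, while the view of a user $U_i$ consists of its \yme~responder transcript and the group-key material; both must be shown simulatable from the allowed outputs alone (the relative order of RSS values, the comparison bits, and the final decision).

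For objectives (ii) and (iii) the reductions are direct. The threshold \ta~enters the protocol only as \fc's private input to \yme, so by Definition~\ref{def:yme} the responder $U_i$ learns nothing beyond the single output bit, and that bit is delivered only to the initiator \fc; hence \ta~stays confidential against the SUs. The secure channel $\chn_i$ and the confidentiality of inter-user communication follow from the assumed authenticated channel together with the secrecy of the group key $K$, which holds whenever {\em TGECDH} is secure (Definition~\ref{def:GroupKey}); here I would invoke the ECDLP-based key indistinguishability to argue that neither \fc~nor an outsider recovers $K$, so the \ope~ciphertexts reveal nothing to them except what \ope~itself leaks.

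Objective (i) is where the real work lies, and I expect the \ope~leakage to be the main obstacle. Because the scheme uses a {\em ROPF}-secure \ope, Definition~\ref{def:OPESecurity} warns that an adversary holding the ciphertexts recovers the order of the plaintexts {\em and} at least half of their high-order bits. The order is exactly the information the scheme intends \fc~to use, so it is harmless; the danger is the high-order-bit leakage. The key step is to show that the deterministic padding $D$, which occupies the top $d-\gam-1$ positions of the encrypted block $D\|\rss_i$, absorbs this leakage: since the \gam-bit RSS value sits in the low-order positions, I must verify that the ``at least half the high-order bits'' that \ope~may expose fall entirely within the known, constant padding region and never reach the \gam~RSS bits. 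This reduces to the quantitative condition that the padding is long enough, essentially $d-\gam-1 \ge d/2$, i.e. the \ope~block size $d$ is at least roughly twice the RSS bit-length, which I would state explicitly as the requirement on $d$. Granting this, the residual leakage about $\rss_i$ is only its order relative to the other values, and combining this with the \yme~guarantee (only the comparison bit with \ta~is revealed) shows that \fc's view is simulatable from the relative order and the decision alone.

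Finally, objective (iv) follows by applying the preceding three arguments afresh in each sensing period $t_w$: on every join or leave the new group $\mathcal{G}'$ runs the {\em TGECDH} update to obtain an independent key $K'$, and the contributory forward-secrecy property of Definition~\ref{def:GroupKey} guarantees that keys from periods in which a party was absent remain hidden, so the per-period reductions for (i)--(iii) continue to hold without leaking across membership changes. The one point I would take care to check here is that refreshing $(K',\lam',\pr')$ does not create cross-period correlations in the \ope~ciphertexts that a ROPF analysis could exploit; the independence of successive group keys is precisely what rules this out.
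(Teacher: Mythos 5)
Your proposal follows essentially the same route as the paper's proof: objectives (ii)--(iii) reduce directly to the \yme~guarantee (only the initiator \fc~learns the comparison bit, so \ta~stays hidden) and to TGECDH plus the authenticated channels, objective (i) rests on the padding $D$ absorbing the ROPF high-order-bit leakage so that \fc~learns only the relative order of the RSS values, and objective (iv) is handled by re-running the argument with the fresh key $(K',\lam',\pr')$ after each membership change. If anything, you are more explicit than the paper, which merely asserts that ``padding $D$ and proper block size'' neutralize the Definition~\ref{def:OPESecurity} leakage, whereas you state the quantitative length condition $d-\gam-1 \ge d/2$ and add the simulator-composition framing; both proofs share the same residual caveat that Definition~\ref{def:OPESecurity} only lower-bounds the leakage, so the padding argument implicitly assumes the leaked bits are confined to the top half of the block.
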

\begin{proof}
In sensing period $t_w$, objectives (i)-(iv) in Definition \ref{def:SecObjectives} are achieved as follows:

{\em \underline{Initialization}:} In Step 1-2, \fc~sets up system and security parameters such that \yme~and \ope~are secure. Padding $D$ and proper block size of \ope~ensures the leftmost bit leakage from \ope~as defined in Definition \ref{def:OPESecurity} does not leak RSS value during the private sensing. In Step 4, SUs establish a group $K$, which protects $r_i$ values against \fc~via \ope~encryption (as required by (i) in Definition~\ref{def:SecObjectives}). In Step 5, \fc~and each $U_i$ establish a secure channel, which protects \ope~encrypted $r_i$ values $c_i$ (under the same group key $K$) from other SUs and external attacker $\mathcal{A}$ (as required by (i) and (iii) in Definition~\ref{def:SecObjectives}).

{\em \underline{Private Sensing}:} \ope~encryptions in Step 6 ensure the confidentiality of $r_i$ values against \fc~during the ciphertext sorting ($c_1$,\ldots,$c_{\nbr}$) in Step 8, as long as \ope~is secure according to Definition \ref{def:OPESecurity} (with proper padding and \ope~block size as set in the initialization phase). Step 7 ensures the confidentiality of $r_i$ of $U_i$ against other SUs as well as the protection of the communication against an external attacker $\mathcal{A}$ via the secure channel. Hence, objective (i) in Definition~\ref{def:SecObjectives} is achieved during \ope~phase of \LPOS. Step 9 - Step 22 execute \yme, which leaks no information on $\ta$ to SUs and $r_i$'s to \fc~as required. Hence, objectives (i)-(iii) in Definition \ref{def:SecObjectives} are achieved during the whole private sensing steps.

{\em \underline{Update Private Sensing after Group Membership Changes}:} Step 23 ensures that a new group key $K'$ (based on Definition \ref{def:GroupKey}) and parameters $(\lam',\pr')$ are generated according to the membership status of the new group $\mathcal{G'}$. Step 24 ensures the private sensing steps are executed using new $(K',\lam',\pr')$ for each new sensing period. Thus, security objectives (i)-(iv) in Definition \ref{def:SecObjectives} are achieved for all sensing periods as required.
\end{proof}

\section{Analysis and Comparison}
\label{sec:PerformanceAnalysis}

\begin{table*}[t!]
\small
\centering  \caption{Communication overhead, computation cost and storage needed of our scheme and previous schemes} \label{tab:Table3}

\renewcommand{\arraystretch}{1.22}{
\resizebox{\textwidth}{!}{%
\begin{tabular}{||c|c||c||c|c||c|c||}

\hline \multicolumn{2}{||c||}{\multirow{2}{*}{\em Evaluation}}   &  \multicolumn{1}{|c||}{\multirow{2}{*}{\textbf{Communication}}} & \multicolumn{2}{|c||}{\textbf{Computation}}    & \multicolumn{2}{|c||}{\textbf{Storage}} \\ \cline{4-7}

\multicolumn{2}{||c||}{}  &  &  \textbf{ {\em FC}} & \textbf{ {\em SU}} & \textbf{ {\em FC}} & \textbf{ {\em SU}} \\ \hline
\hline  \multicolumn{2}{||c||}{\textbf{{\em Our Scheme:} \LPOS}} & $ 2\gam \cdot |p| \cdot (2+log\:\nbr)+ \nbr \cdot \epsilon_{\ope}+|Q| \cdot log\: \nbr$  &  $1/2 \cdot(2+log\:\nbr)\cdot\gam \cdot|p|\cdot Mulp$ & $(2\gam\cdot |p|+2\gam)\cdot Mulp+\ope +2\:log\: \nbr \cdot PMulQ$ &$4|p|$ & $|p|+|K|$ \\ \hline
\hline  \multirow{2}{*}{\em Generic} & \ECEG & $4|Q| \cdot \nbr$ &$PMulQ + PAddQ + \sqrt{\nbr\cdot \delta} \cdot \pol$ & $2PMulQ + (\nbr-1)\cdot PAddQ$ &$(1+2 \cdot O(log(n\delta))) \cdot |Q|$ & $|Q|$  \\ \cline{2-7}
                                     & \PDAFT \cite{chen2014pdaft} & $2|\rsamod|\cdot (\nbr+1)$ & $2Exp\rsamod^2 + Inv\rsamod^2 + \srv \cdot Mul\rsamod^2$ &$ 2Exp\rsamod^2 + Mul\rsamod^2$ &$2|\rsamod|$ & $|\rsamod| + |\rsamod^2|$   \\ \cline{1-7}

\hline \hline \multicolumn{2}{||c||}{\PPSS\cite{li2012location}} & $ |p|\cdot \nbr$  &  $H + (\nbr+2) \cdot Mulp + (2^{\gam-1}\cdot \nbr + 2) \cdot Expp$ & $H + 2Expp + Mulp$ &  $(\nbr+1)\cdot |p|$ & $(\nbr+1)\cdot |p|$ \\

\hline

\end{tabular}}}

\flushleft{\scriptsize{ \textbf{(i) Variables:} $\gam$: size of the sensing reports, $\nbr$: number of $SUs$, $\rsamod$: modulus in Paillier, $p$: modulus of El Gamal, $H$: cryptographic hash operation, $K$: group key used in \ope. $Expu$ and $Mulu$ denote a modular exponentiation and a modular multiplication over modulus $u$ respectively, where $u \in \{\rsamod, \rsamod^2, p\}$. $Inv\rsamod^2$: modular inversion over $\rsamod^2$, $PMulQ$: point multiplication of order $Q$, $PAddQ$: point addition of order $Q$. $\srv$: number of servers needed for decryption in \PDAFT.
\textbf{(ii)  Parameter size:} For a security parameter $\kappa = 80$, suggested parameter sizes by {\em NIST 2012} are given by : $|\rsamod| = 1024,\; |p| = 1024, \; |Q|=160$ as indicated in \cite{keylength}.
\textbf{(iii) OPE:} the computational complexity of the \ope~is given by $\ope = (log\:|\mathscr{C}|+1)\cdot T_{HGD} + (log\:|\mathscr{P}|+3)\cdot(5log\:|\mathscr{C}| +\theta'+1)/128\cdot T_{AES} $, where $\mathscr{P}$,$\mathscr{C}$ are plaintext and ciphertext spaces respectively and $\theta'$ is a constant. $\epsilon_{\ope}$ is the maximum ciphertext size that could be obtained under the \ope~encryption. This value was determined experimentally based on the \ope~implementation in \cite{opeRuby} and we noticed that it doesn't exceed the 128bits block size of the underlying AES block cipher $\Rightarrow \epsilon_{\ope} = 128\:bits$. \textbf{(iv) ECEG:} The SUs use the FC's \ECEG~public key to encrypt their RSSs and then one node is picked to collect the ciphertexts and multiply them together including its own encrypted RSS and then send the result to the FC. The decryption of the aggregated message in \ECEG~is done by solving the constrained ECDLP problem on small plaintext space similarly to \cite{li2012location} via Pollard's Lambda algorithm, which requires $O(\sqrt{n \cdot\delta})\cdot \pol$ computation and $O(log(n\delta))$ storage \cite{menezes2010handbook}, where $\delta = a-b$ if $RSS \in [a,b]$ and \pol \;is the number of point operations in Pollard Lambda algorithm which varies depending on algorithm implementation used. \textbf{(v) YM.ElGamal:} The communication cost for one comparison is $4\gam \cdot |p|$. The total computational cost of the scheme for one comparison is $5\gam \:log\:p +2\nbr$. Since in our scenario the value of the energy threshold $\ta$ remains unchanged, we can encrypt it only once and offline so the encryption cost can be omitted and the new total computational cost would be $3\gam \cdot |p|+2\gam$ for each comparison operation. \textbf{(vi) TGECDH}: It permits the alteration of group membership (i.e., join/leave), on average $\mathcal{O}(log(n))$ communication and computation (i.e., ECC scalar multiplication) \cite{steiner1996diffie}.
}}

\vspace{-3mm}

\end{table*}

\begin{figure*}[htp]
  \centering

  \subfigure[Communication overhead]{\includegraphics[scale=0.36]{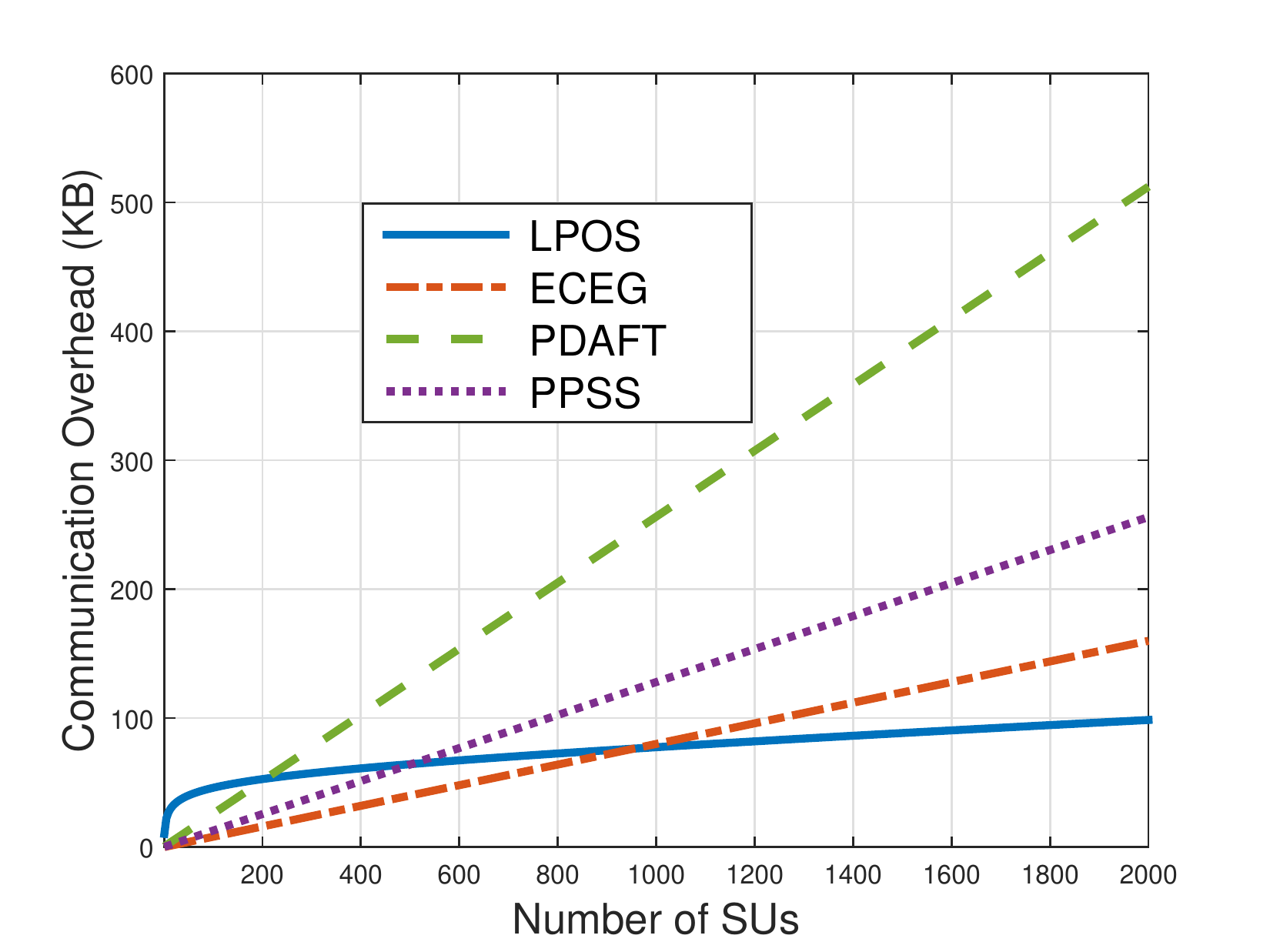}\label{fig:commOv}}
  \subfigure[Properties of different schemes as in Table~\ref{tab:Table1}]{\includegraphics[scale=0.215]{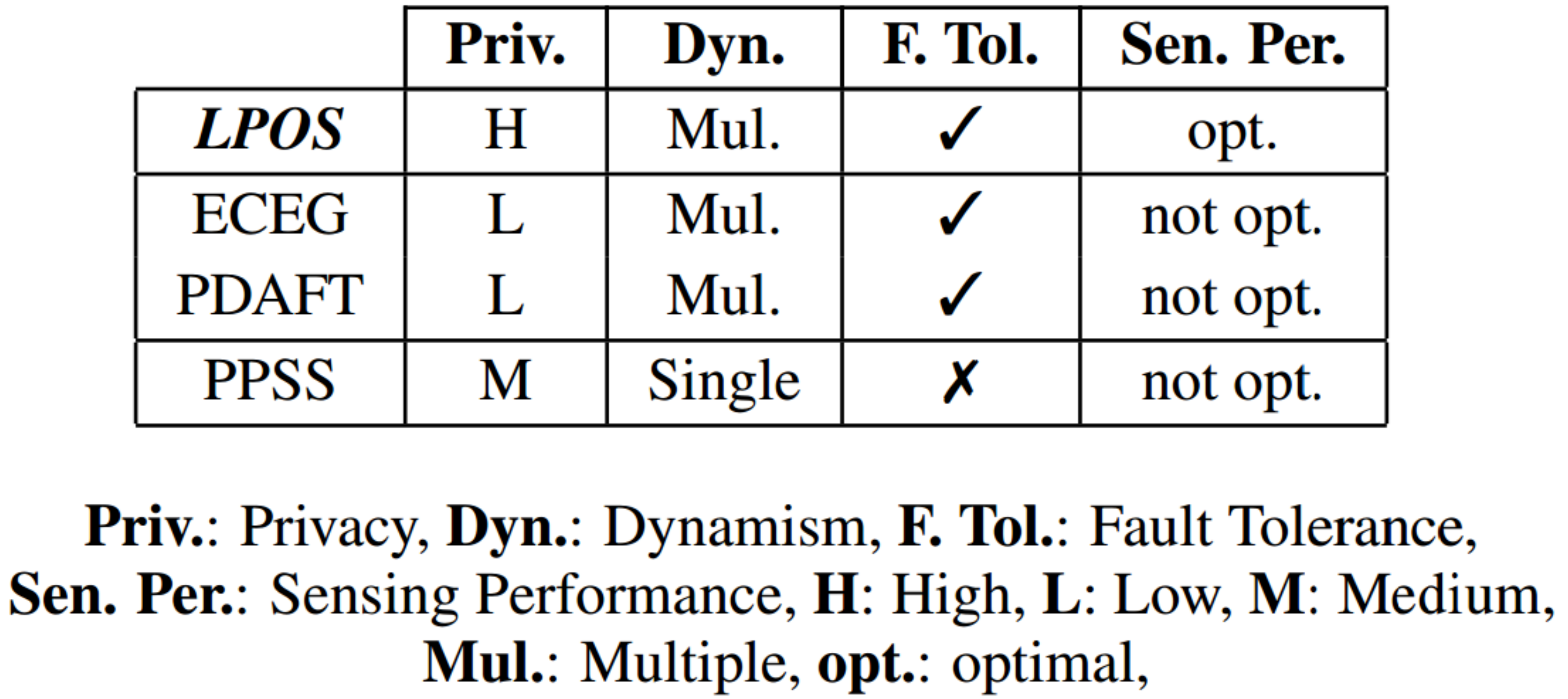}\label{fig:table}}
  \subfigure[Computational Overhead]{\includegraphics[scale=0.36]{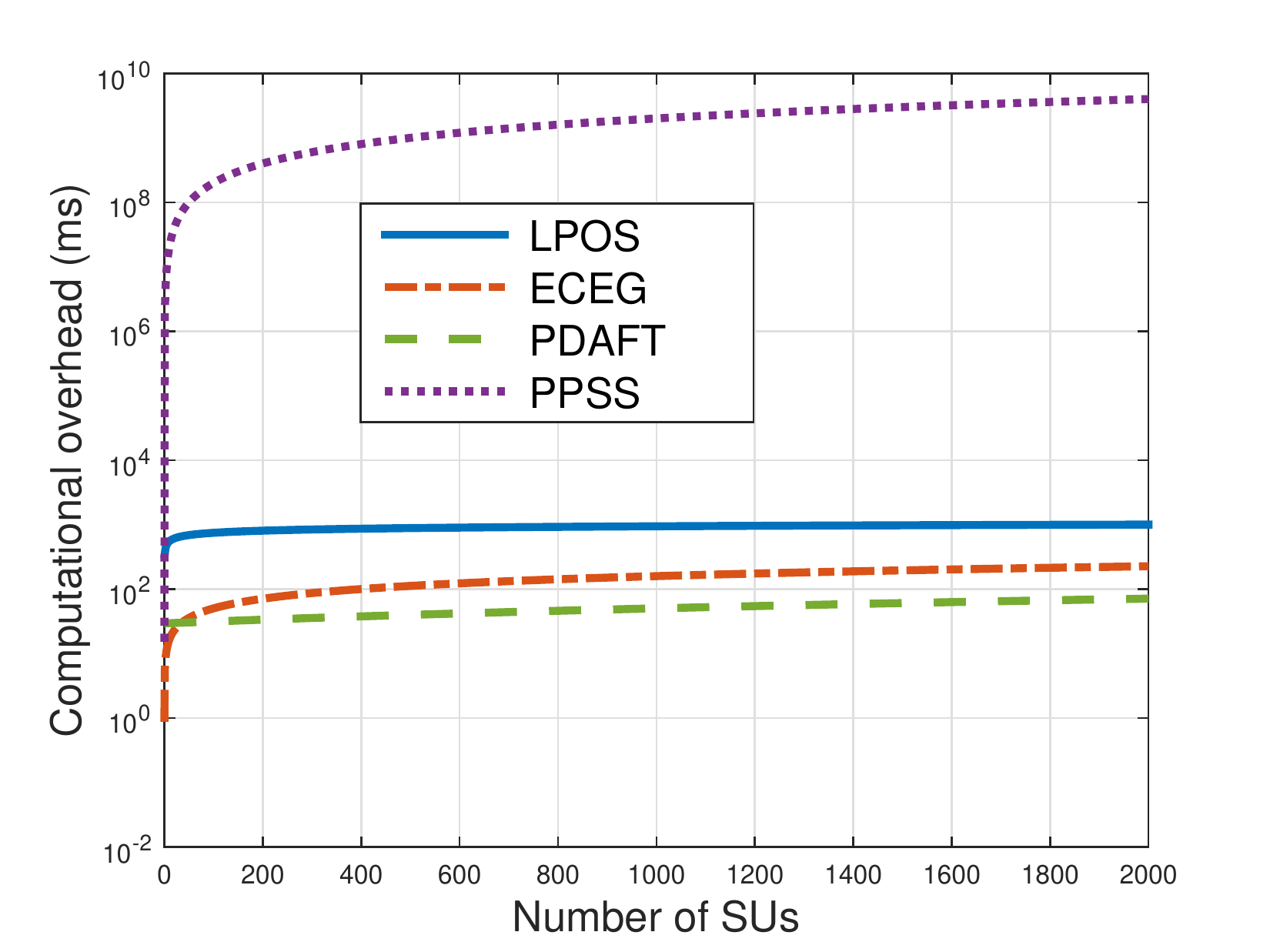}\label{fig:compOv}}
  \caption{Performance Comparison} \label{perfComp}

\end{figure*}






\textbf{Location Privacy, Sensing Accuracy and Reliability}: As shown in Table~\ref{tab:Table1}, \LPOS~achieves the highest level of privacy and decision accuracy among its counterparts. That is, \LPOS~is the only scheme that achieves high location privacy while enabling an optimal spectrum sensing. Moreover, \LPOS~provides fault tolerance and support for dynamism of multiple SUs in the network, which makes it reliable. In addition, \LPOS~also achieves low communication, computation and storage overhead as discussed below.

\textbf{Communication, Computation and Storage Overhead}: Our analytical comparison is summarized in Table \ref{tab:Table3}, which also gives detailed explanations about variables, parameter sizes as well as overhead of building blocks and other schemes included in this comparison. The cost of \LPOS~is determined by \yme, \ope, and {\rm TGECDH}, whose costs are outlined in Table~\ref{tab:Table3}. Notice that the overall cost of \LPOS~is dominated by \yme~protocol, and yet \yme~is invoked only $\mathcal{O}(log(n))$ at the worst case (as explained in Section \ref{sec:ProposedSchemes} in detail). This permits high computational and communication efficiency.

As shown in Fig.~\ref{fig:commOv}\footnote{\scriptsize{Communication overhead of each scheme is calculated by evaluating its corresponding analytical results in Table \ref{tab:Table3} with the parameter sizes given in item (ii)-Table \ref{tab:Table3}.}}, \LPOS~offers the smallest communication overhead among all alternatives for large network sizes thanks to $\mathcal{O}(log(n))$ complexity for all public keys to be transmitted (the small constant $\epsilon_{\ope}$ per user has little impact on the overall communication overhead as seen in the Fig.~\ref{fig:commOv}). It is followed by \ECEG, who has small key sizes for small number of users due to compact ECC parameters. \PPSS~has a high communication overhead, while \PDAFT~incurs extremely large communication overhead due to heavy Pailler encryption.

As shown in Fig.~\ref{fig:compOv}\footnote{\scriptsize{The execution times were measured on a laptop running Ubuntu 14.10 with 8GB of RAM and a core M 1.3 GHz Intel processor, with cryptographic libraries MIRACL \cite{miracl}, Crypto++ \cite{crypto++} and {\em Louismullie}'s Ruby implementation of \ope~\cite{opeRuby}.}}, all compared alternatives are significantly more computationally efficient than \PPSS, while \LPOS~is comparable but little less efficient than \ECEG~and \PDAFT.

Observe that while offering the smallest communication overhead (vital for scalability) and reasonable computation efficiency, \LPOS~is the only scheme that enables optimal spectrum sensing based on voting approach by also providing the highest level of location privacy, fault-tolerance and network dynamism.


\section{Conclusion}
\label{sec:Conclusion}
We design a location privacy preserving scheme for CRNs that achieves high sensing accuracy. Our scheme has several key features, making it more practical, secure, and reliable for large-scale CRNs. When compared to existing approaches, \LPOS~achieves optimal sensing performances with high location privacy while being robust against network dynamism.

\small{
\bibliographystyle{IEEEtran}
\bibliography{IEEEabrv,./references}
}

\end{document}